\documentclass[journal,twoside]{IEEEtran}

\usepackage{amsmath,amssymb,amsfonts,mathtools}
\usepackage{amsthm}
\usepackage{bm}
\usepackage{booktabs}
\usepackage{graphicx}
\usepackage{cite}
\usepackage{url}
\usepackage{xcolor}
\usepackage{float}
\usepackage{comment}

\definecolor{revisionblue}{RGB}{0,0,0}
\newif\ifshowrevisions
\showrevisionstrue 

\newcommand{\revcolor}{\ifshowrevisions\color{revisionblue}\fi}

\makeatletter
\newcommand{\revisionbibcolor}[1]{\normalcolor\ifshowrevisions
  \ifcsname revisionbib@#1\endcsname\color{revisionblue}\fi\fi}
\let\revision@bibitem\@bibitem
\def\@bibitem#1{\revisionbibcolor{#1}\revision@bibitem{#1}}
\let\revision@lbibitem\@lbibitem
\def\@lbibitem[#1]#2{\revisionbibcolor{#2}\revision@lbibitem[#1]{#2}}
\@namedef{revisionbib@Heiss2012}{}
\@namedef{revisionbib@Dobson2001}{}
\@namedef{revisionbib@Weerts2018}{}
\@namedef{revisionbib@Fan2026Feedback}{}
\makeatother

\usepackage{enumitem}
\usepackage{array}

\newtheorem{theorem}{Theorem}
\newtheorem{lemma}{Lemma}
\newtheorem{proposition}{Proposition}
\newtheorem{corollary}{Corollary}
\newtheorem{definition}{Definition}

\newcommand{\C}{\mathbb{C}}
\newcommand{\eps}{\varepsilon}
\newcommand{\spec}{\sigma}

\newcommand{\Ylocal}{\mathcal{Y}_{\mathrm{local}}}
\newcommand{\YPMU}{\mathcal{Y}_{\mathrm{PMU}}}
\newcommand{\Yrich}{\mathcal{Y}_{\mathrm{rich}}}
\newcolumntype{P}[1]{>{\raggedright\arraybackslash}p{#1}}

\title{Source--Grid Coupling in Power System Oscillations: Observational Equivalence, Feedback Detectability, and Modal Interaction}

\newcommand{\AuthorNames}{}

\author{\AuthorNames}

\author{Kai~Sun,~\IEEEmembership{Fellow,~IEEE} and Bin~Wang,~\IEEEmembership{Senior Member,~IEEE}%
\thanks{K. Sun is with the Min H. Kao Department of Electrical Engineering and Computer Science, The University of Tennessee, Knoxville, TN 37996 USA (e-mail: kaisun@utk.edu). B. Wang is with ISO New England (e-mail: bwang@iso-ne.com).}}

\newcommand{\GridReal}{-0.094083}
\newcommand{\GridImag}{4.416645}
\newcommand{\GridFrequency}{0.70}
\newcommand{\GridDamping}{2.1}

\newcommand{\TwoModeProduct}{(5.3626+j6.4589)\times10^{-4}\,\mathrm{s}^{-2}}

\newcommand{\TwoModeRange}{1}

\newcommand{\LocalShortEpsilon}{1}
\newcommand{\LocalShortPower}{8.5}

\newcommand{\SmallestQualifiedCoupling}{1}
\newcommand{\FrequencyFalseNatural}{100}
\newcommand{\DetuneFarSNR}{11.25}
\newcommand{\DetuneFarPower}{73.5}
\newcommand{\DetuneNearSNR}{16.69}
\newcommand{\DetuneNearPower}{100.0}

\newcommand{\IDFiveSelection}{100.0}
\newcommand{\IDFiveEstimate}{1.102}
\newcommand{\IDFiveMSE}{1.220}
\newcommand{\IDFiveOracleMSE}{1.000}

\begin{document}
\maketitle

\begin{abstract}
The conventional distinction between forced and natural oscillations treats the oscillation source as external to a fixed power-system boundary. This paper develops a coupling-aware dynamical interpretation by explicitly accounting for the directionality of source–grid coupling. It examines when prescribed forcing and a one-way autonomous source are observationally equivalent, and when grid measurements can distinguish the presence of grid-to-source feedback. Matched interface signals yield exact equivalence under the same deterministic grid initial-value problem and measurement map. Bidirectional feedback enables source–grid eigenvalue interaction in the augmented autonomous model, while a reduced two-mode model shows that the interaction depends jointly on loop coupling and complex modal separation. The resulting framework establishes source–grid coupling directionality as a feedback-based dynamical boundary and clarifies when feedback can be inferred from grid-side measurements and why weak feedback may remain unresolved. Simulation studies on a two-area system demonstrate feedback detection within a calibrated source family, while source-model mismatch can be confounded with feedback. A WECC 243-bus system study demonstrates feasibility on a large system and shows that feedback detection need not require exact candidate selection.
\end{abstract}

\begin{IEEEkeywords}
Forced oscillation, feedback detection, inter-area oscillation, modal analysis, phasor measurement units, power-system dynamics, resonance, source--grid coupling.
\end{IEEEkeywords}

\section{Introduction}
\IEEEPARstart{S}{ustained} oscillations are a continuing reliability concern in interconnected power systems. A conventional diagnosis first asks whether an observed component is a poorly damped natural mode or a response to periodic forcing. The distinction matters operationally: modal damping problems motivate controller retuning or wide-area damping actions, whereas a forced oscillation calls for locating, isolating, or modifying the responsible device. Synchrophasor measurements have therefore motivated methods for forced-oscillation detection, discrimination, localization, and suppression \cite{Wang2017Location,IEEE_TR110_2023,FollumPierre2016,Ye2017,Trudnowski2020,Cai2026}. Source-location methods based on dissipating energy, effective generator impedance, unknown-input observers, and data-driven residuals provide complementary information about where oscillatory power is injected or sustained \cite{Jha2019,Chevalier2018,Huang2020,JiangSWT2024,Tao2024}. Near resonance, however, forced responses may exhibit the same frequency and spatial signatures as a lightly damped local grid mode, complicating diagnosis \cite{Sarmadi2016,Huang2020,XieTrudnowski2017}.

An oscillation source is conventionally modeled as an exogenous periodic input to the grid model. This modeling choice is useful, but it leaves the physical dynamics of the source outside the system boundary. A malfunctioning controller, converter, cyclic load, turbine-governor component, or other oscillatory device generally has internal states and may itself be affected by grid-side terminal voltage, frequency, or power. {Including these dynamics permits analysis of source--grid interaction; association with equilibrium modes additionally requires a valid small-signal description.} Thus, when such devices are treated as oscillation sources, they need not be ideally exogenous. The diagnosis then involves not only oscillation frequency and mode shape, but also: 1) \emph{the directionality of source--grid coupling} and 2) \emph{whether that directionality can be inferred from available grid-side measurements}.

This paper systematically answers those two questions under a general grid-source coupling model:
\begin{align}
    \dot x &= f(x,z), \label{eq:gx}\\
    \dot z &= g(z,x), \label{eq:gz}\\
    y &= h(x,u_s). \label{eq:gy}
\end{align} 
where \(x\), $z$ and $y$ respectively denote grid dynamic states, hidden source states, and grid-side measurements. {The source acts through interface $u_s$; dependence of a grid output on this interface does not mean that an internal source state is measured. If $g(z,x)$ is independent of $x$ on the domain of interest, the
source acts one-way on the grid. Otherwise, with source-to-grid
influence present, the source and grid are bidirectionally coupled;
a zero grid-to-source Jacobian at one equilibrium does not by itself
exclude nonlinear feedback.} The central question is not whether a periodic signal can be written as an autonomous state-space model---that construction is classical \cite{FrancisWonham1976}---but when the prescribed forcing is observationally equivalent to a one-way autonomous source (Q1), and when grid measurements are able to distinguish whether or not a source receives grid-to-source feedback (Q2). An engineer may need to include return influence in a dynamic model before a unique internal source model can be recovered.

The main contributions of the paper are as follows.
\begin{enumerate}[leftmargin=*,label=\arabic*)]

\item Three source classes are defined: prescribed exogenous forcing
(H0), a one-way autonomous source (H1), and a bidirectionally coupled
source (H2), separating source representation from coupling direction.

\item Matched H0 and H1 realizations are proved to be observationally
equivalent from grid-side measurements when they generate the same
interface signal; distinguishing them therefore requires source-side
information, intervention, or additional assumptions.

\item The paper establishes when grid-side measurements can detect grid-to-source feedback under uncertain source parameters, measurement noise, and finite data, and shows that feedback detection is distinct from—and generally easier than—full source and parameter identification.

\item A simple modal interpretation of source–grid feedback is developed using perturbation theory to show how bidirectional coupling shifts and mixes source and grid modes, depending on coupling strength and modal separation.

\item The proposed framework is validated on both small and large systems,
covering H0/H1 equivalence, conditional H1/H2 feedback detection,
weak-feedback limitations, model mismatch, and the distinction between
feedback detection and parameter recovery.
\end{enumerate}

Taken together, these contributions establish a coupling-aware framework
for interpreting sustained oscillations from grid-side measurements. The
framework separates two questions that are often conflated: how the
oscillation source is represented relative to the grid boundary, and
whether the grid dynamically feeds back into that source. It further
clarifies the limits of grid-side inference and distinguishes feedback
detection from hidden-source parameter recovery. The spectral tools used
to characterize bidirectional coupling are established results
\cite{Heiss2012,Dobson2001,Fan2026Feedback}; this paper connects them to
the forced--natural oscillation boundary. In particular, matched H0/H1
realizations can be observationally equivalent, whereas H1/H2
discrimination is possible only under appropriate model, measurement,
and numerical conditions, consistent with model-conditioned
identifiability \cite{Weerts2018}. In the rest of the paper, Sections II--V develop the source--grid coupling framework, establish feedback distinguishability, examine its spectral effects, and formulate a coupling-aware diagnostic interpretation; Sections VI and VII present comprehensive tests on the two-area system and a demonstration on the WECC system; Section VIII concludes the paper.

\section{Proposed Directional Source–Grid Coupling Framework}
\subsection{Conventional Fixed-Boundary Forced/Natural Formulation}
A power-system dynamic model can be written as differential-algebraic equations
\begin{align}
    \dot x_g &= f_g(x_g,v,u_s), \label{eq:dae1}\\
    0 &= g_g(x_g,v,u_s), \label{eq:dae2}\\
    y &= h_g(x_g,v), \label{eq:dae3}
\end{align}
where \(x_g\) contains generator, exciter, governor, converter, and control states; \(v\) contains algebraic bus variables; and \(u_s\) is the source-interface signal. Linearization and elimination of regular algebraic variables yield
\begin{equation}
    \delta\dot x=A\delta x+B_u\delta u_s,\qquad
    \delta y=H\delta x+J\delta u_s. \label{eq:reduced_grid}
\end{equation}
where $\delta x$, $\delta u_s$, and $\delta y$ denote small deviations of the state, source-interface signal, and output from their equilibrium values, respectively.
With \(\delta u_s=0\), the eigenvalues of \(A\) define the conventional small-signal modes \cite{Kundur1994}. With a periodic \(\delta u_s\), the steady response is governed by the transfer matrix from the injection channel to the measured outputs. A large forced response near a lightly damped pole is therefore expected even when the forcing amplitude is small \cite{Sarmadi2016}.

Compared to the complete model in equations (\ref{eq:gx})-(\ref{eq:gy}), the reduced model does not specify whether \(u_s\) is generated by a clocked reference, a hidden oscillator, or a device that is dynamically affected by the grid. That distinction motivates the following source hypotheses.

\subsection{Three Source Hypotheses and the Feedback Boundary}
To distinguish different source--grid interaction structures, consider these three hypotheses according to the direction of dynamical coupling between the source and the grid.

\vspace{-0.1in}
\begin{table}[H]
\centering
\caption{Source Models and Their Grid-Side Meaning}
\label{tab:sourceclasses}
\footnotesize
\setlength{\tabcolsep}{3pt}
\begin{tabular}{@{}P{0.07\columnwidth}P{0.38\columnwidth}P{0.43\columnwidth}@{}}
\toprule
Class & Source model and feedback & Grid-side meaning\\
\midrule
H0 & Explicit \(u_s(t)\); source dynamics outside the model & Conventional forced input\\
H1 & \(\dot z=g_0(z)\); no grid-to-source feedback & Hidden generator of an interface signal\\
H2 & \(\dot z=g_0(z)+\eps q(z,x)\); feedback present & Closed source--grid dynamics\\
\bottomrule
\end{tabular}
\end{table}

\textbf{H0: Prescribed forcing.} The source is represented only by a known or unknown time signal,
\begin{equation}
    \dot x=f_0(x,u_s(t)),\qquad y={h(x,u_s)}. \label{eq:H0}
\end{equation}
A common example is \(u_s(t)=a\sin(\omega_f t+\phi)\).

\textbf{H1: One-way autonomous source.} The source has hidden states but no grid-to-source feedback,
\begin{align}
    \dot x&=f(x,z),\label{eq:H1x}\\
    \dot z&=g_0(z).\label{eq:H1z}
\end{align}
The source drives the grid, while its trajectory is unaffected by grid variables.

\textbf{H2: Bidirectionally coupled source.} The source receives a feedback signal from the grid,
\begin{align}
    \dot x&=f(x,z),\label{eq:H2x}\\
    \dot z&=g_0(z)+\eps q(z,x),\label{eq:H2z}
\end{align}
where $\eps\geq 0$ is a dimensionless continuation parameter that scales
a fixed feedback law with specified physical units. The feedback sign and
direction are determined by $q$, or by a calibrated matrix $C$ after
linearization. A positive $\eps$ corresponds to H2 provided that the
specified feedback path is nonzero, whereas $\eps=0$ recovers the one-way
boundary. Note that the decomposition of the grid-to-source feedback into the scalar
coupling parameter $\eps$ and the feedback map $q$ is introduced primarily
for continuation and sensitivity analyses. Physically, the feedback is
determined by the effective feedback law $\eps q$.

The proposed framework addresses two primary diagnostic questions: Can different one-way source realizations, particularly H0 and H1, be distinguished from grid-side measurements (Q1)? Can grid-to-source feedback be detected, thereby distinguishing H2 from the one-way H1 boundary (Q2)?  
Considering that statistical inference studies can compare only specified candidate source families, for a stated source
parameterization, define:
\begin{equation}
    \mathcal{M}_{0}^{\mathrm{cal}}:\ \eps=0,
    \qquad
    \mathcal{M}_{+}^{\mathrm{cal}}:\ \eps>0 .
    \label{eq:calibrated_model_families}
\end{equation}
Here, both families share the stipulated source structure and admissible
nuisance-parameter ranges. Thus, $\mathcal{M}_{0}^{\mathrm{cal}}$ is a
restricted zero-feedback candidate family rather than the unrestricted
set H0$\cup$H1; in particular, it does not include an arbitrary prescribed H0 waveform.

\emph{Remark:} After augmenting the grid states with the source states,
as shown in \eqref{eq:gx} and \eqref{eq:gz}, both H1 and H2 form
autonomous dynamical systems. Their distinction is therefore not
autonomy, but whether the grid exerts a return influence on the source
dynamics: this influence is absent in H1 and present in H2.
Autonomy should also not be confused with absence of energy supply.
A self-sustained device can be autonomous and dissipative while
maintaining a limit cycle using an internal electrical or mechanical
energy supply \cite{Strogatz2015}, such as the Van der Pol-type
oscillation source introduced in Section~V-B.
Similarly, a device may draw average power from the grid while remaining
dynamically insensitive to the grid variables associated with the
oscillatory feedback path. Thus, an energetic connection does not by itself imply dynamical
grid-to-source feedback in \eqref{eq:H2z}, nor does the existence of
such feedback guarantee that it can be identified from grid-side
measurements.

\section{Observational Equivalence and Conditional Feedback Distinguishability}
\subsection{H0--H1 Observational Equivalence}
\begin{definition}[Grid-side observational equivalence]
Two hidden-source models are grid-side observationally equivalent on \([0,T]\) if, when connected to the same deterministic grid model with the same initial grid state, they produce identical measured trajectories \(y(t)\) on that interval.
\end{definition}

\begin{theorem}[Interface-signal equivalence]\label{thm:interface}
Assume the grid model \(\dot x=f_0(x,u_s)\), \(y={h(x,u_s)}\), has a unique solution for each admissible interface signal and initial state. {The two interconnections use this same deterministic grid model, connection interface, other applied inputs, grid initial condition, and measurement map.} If two source realizations generate the same \(u_s(t)\) on \([0,T]\), then they generate the same \(x(t)\) and \(y(t)\) on \([0,T]\). No measurement confined to the grid trajectory can distinguish the two source realizations.
\end{theorem}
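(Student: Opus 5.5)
The plan is to reduce everything to the uniqueness hypothesis on the grid initial-value problem. First, I will make precise what it means for a source realization to be ``connected'' to the grid. Under H0 the interface is the prescribed signal \(u_s(t)\). Under H1 it is produced by the hidden state \(z\) through the interface map, and \(z\) evolves by \(\dot z=g_0(z)\) independently of \(x\). Under either realization, the grid subsystem therefore satisfies \(\dot x=f_0(x,u_s(t))\) with a time signal \(u_s\) on the right-hand side. In the H1 case I will write the interconnection \(\dot x=f(x,z)\) as \(f_0(x,u_s)\) with \(u_s=u_s(z)\), using the standing assumption that the source acts only through the interface \(u_s\). Along any solution, \(x(\cdot)\) is then a solution of the non-autonomous grid equation driven by the signal \(t\mapsto u_s(t)\) generated by that realization.

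Next, I will fix the two realizations, call them \(S_a\) and \(S_b\), and let \(x_a,x_b\) be the corresponding grid trajectories. By hypothesis both start from the same \(x(0)=x_0\), use the same \(f_0\), the same other applied inputs, and the same interface. They also generate the same \(u_s(t)\) on \([0,T]\). Hence \(x_a\) and \(x_b\) both solve the single initial-value problem \(\dot x=f_0(x,u_s(t))\), \(x(0)=x_0\), on \([0,T]\). The assumed uniqueness of solutions for each admissible interface signal and initial state then gives \(x_a\equiv x_b\) on \([0,T]\).

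Then I will obtain the outputs directly from the measurement map: \(y_a(t)=h(x_a(t),u_s(t))=h(x_b(t),u_s(t))=y_b(t)\) for all \(t\in[0,T]\), since the same \(h\) and the same \(u_s\) appear in both. For the final sentence of the theorem, I will argue that any measurement confined to the grid trajectory is a functional of \((x,y)\) on \([0,T]\), possibly together with the common known inputs. Because these coincide, that functional takes identical values under both realizations. By the Definition above, the two realizations are therefore grid-side observationally equivalent.

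The main obstacle I expect is the first step, not the uniqueness argument. I must justify that under H1 the grid really sees \(z\) only through \(u_s\), and that the H1 interconnection has no hidden back-reaction altering \(u_s\). This is where the one-way structure \(\dot z=g_0(z)\) matters: the signal \(u_s(t)\) is determined before solving for \(x\), so the coupled problem decouples into a cascade. I will also state explicitly that \(u_s\) must be admissible, for example measurable and locally bounded, so that the uniqueness hypothesis applies. I will note too that if \(y\) depends on \(u_s\) through \(J\), matching \(u_s\) already covers that dependence.
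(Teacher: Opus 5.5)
Your proposal is correct and is essentially the paper's own proof: both grid trajectories solve the same IVP \(\dot x=f_0(x,u_s(t))\), \(x(0)=x_0\), so uniqueness makes them equal, and the common map \(h(x,u_s)\) then makes the outputs equal. One correction to your framing: the one-way cascade structure is not what makes the argument work. The theorem takes the realized signal \(u_s(t)\) on \([0,T]\) as given, so any realization acting only through \(u_s\) produces a grid trajectory solving that same driven IVP; this includes an H2 source with feedback, as the paper uses in Corollary~\ref{cor:replay}. The step you flag as the main obstacle is therefore not an obstacle.
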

\begin{proof}
Both interconnections solve the same initial-value problem (IVP) with the same input and initial condition. Uniqueness gives identical grid-state trajectories, and the output map then gives identical measurements.
\end{proof}

\emph{Interpretation:}
This mathematical statement follows directly from uniqueness of the grid IVP; its significance is the resulting inference limitation: two hidden sources producing the same interface signal are indistinguishable from grid-side measurements alone, even with complete grid-state measurement, unless additional source-side information, intervention, or structural assumptions are available.

Equality of interface signals is sufficient but is not claimed to be necessary: output equality alone need not determine an input without additional observability or input-reconstruction assumptions.

For a sinusoidal source, define
\begin{equation}
    \dot z=S_\omega z,\quad
    S_\omega=\begin{bmatrix}0&\omega\\-\omega&0\end{bmatrix},\quad
    u_s=Lz. \label{eq:exosystem}
\end{equation}
Appropriate \(z(0)\) and \(L\) reproduce any \(a\sin(\omega t+\phi)\), consistent with the classical autonomous-signal-generator construction \cite{FrancisWonham1976}.

\begin{corollary}[Rigid sinusoid versus hidden oscillator]\label{cor:h0h1}
A prescribed sinusoidal input (H0) and an unobserved one-way harmonic oscillator (H1) are exactly grid-side observationally equivalent when they generate the same interface signal, even if the full grid state is measured.
\end{corollary}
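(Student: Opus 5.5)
The corollary follows by specializing Theorem~\ref{thm:interface} to the exosystem \eqref{eq:exosystem}. The plan has three steps: show that the H1 oscillator produces exactly the prescribed sinusoid, verify the theorem's hypotheses for the two interconnections, and then read off the conclusion about full-state measurement.

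\emph{Step 1: the oscillator reproduces the sinusoid.} Solve the oscillator explicitly. Since $S_\omega$ is skew-symmetric, $e^{S_\omega t}$ is a rotation matrix:
\[
e^{S_\omega t}=\begin{bmatrix}\cos\omega t&\sin\omega t\\-\sin\omega t&\cos\omega t\end{bmatrix}.
\]
Hence $u_s(t)=Le^{S_\omega t}z(0)$. Choose $L=[1\ \ 0]$ and $z(0)=a[\sin\phi\ \ \cos\phi]^{\top}$. A short trigonometric identity then gives $u_s(t)=a\sin\phi\cos\omega t+a\cos\phi\sin\omega t=a\sin(\omega t+\phi)$. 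This identity holds for all $t\ge0$, so in particular on $[0,T]$. For a vector-valued interface, take $L$ with one such row per channel.

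\emph{Step 2: the hypotheses of Theorem~\ref{thm:interface} hold.} The H1 interconnection is \eqref{eq:H1x}--\eqref{eq:H1z}. Because $\dot z=S_\omega z$ contains no $x$, the source trajectory $z(t)$ can be computed before the grid equations are solved. The grid then sees the explicit time function $u_s(t)=Lz(t)$, so its dynamics are $\dot x=f_0(x,u_s(t))$, $y=h(x,u_s)$. This is exactly the H0 model \eqref{eq:H0} with the same input. The remaining conditions are assumed in the corollary: the same grid model, the same interface, the same other inputs, the same initial grid state, and the same measurement map. The step that needs the most care is this one, namely confirming that the H1 interconnection is genuinely equivalent to the forced IVP. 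It rests on the cascade (one-way) structure. If the source depended on $x$, as under H2, it could not be decoupled, and the argument would fail.

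\emph{Step 3: conclusion.} Theorem~\ref{thm:interface} now gives identical $x(t)$ on $[0,T]$. Since identical $x(t)$ is obtained, even a measurement map $y=x$, meaning full grid-state measurement, yields identical outputs. This establishes the ``even if the full grid state is measured'' clause. Finally, $z$ never enters $y$ except through $u_s$, and $u_s$ is common to both models. The hidden oscillator state therefore leaves no distinguishing trace in grid-side measurements.
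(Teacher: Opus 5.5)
Your proof is correct and follows the paper's route. The corollary is Theorem~\ref{thm:interface} specialized to the exosystem \eqref{eq:exosystem}, and your Step~1 only makes explicit the paper's remark that suitable $z(0)$ and $L$ reproduce any $a\sin(\omega t+\phi)$; one overstatement is your claim that the argument ``would fail'' under H2, because the grid trajectory of any interconnection solves the forced IVP driven by its realized interface signal (exactly Corollary~\ref{cor:replay}), so the cascade structure only makes the matched signal determinable from the source alone rather than being needed for the equivalence itself.
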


\begin{corollary}[Single-trajectory replay]\label{cor:replay}
Suppose an H2 interconnection drives the grid only through the interface variable in \(\dot x=f_0(x,u_s)\), \(y={h(x,u_s)}\), and generates an admissible realized signal \(u_s^*(t)\) on \([0,T]\). Assume unique grid solutions and the same initial grid state. An H0 model prescribing \(u_s(t)=u_s^*(t)\) reproduces the identical grid trajectory and output on that interval.
\end{corollary}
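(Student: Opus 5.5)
The plan is to reduce Corollary~\ref{cor:replay} to Theorem~\ref{thm:interface}, treating the H2 closed loop and the H0 open-loop model as two source realizations that act on the same grid model $\dot x=f_0(x,u_s)$, $y=h(x,u_s)$. Only the realized interface signal matters, not how it was generated. First I fix the H2 trajectory $(x^*(t),z^*(t))$ on $[0,T]$ that produces the realized signal $u_s^*(t)$. By hypothesis, the grid dynamics in this interconnection depend on the source only through $u_s$. Hence $x^*(t)$ satisfies
\begin{equation*}
\dot x^*(t)=f_0\bigl(x^*(t),u_s^*(t)\bigr),\qquad x^*(0)=x_0 .
\end{equation*}
In words, $x^*$ is a solution of the open-loop grid IVP driven by the \emph{fixed} time function $u_s^*$.

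Second, I consider the H0 model that prescribes $u_s(t)=u_s^*(t)$. It starts from the same $x_0$ and uses the same measurement map. Since $u_s^*$ is admissible, the assumed uniqueness gives exactly one solution $\hat x(t)$ of the same IVP. We already know that $x^*$ solves this IVP, so $\hat x\equiv x^*$ on $[0,T]$. Third, the outputs agree pointwise: $y(t)=h(x^*(t),u_s^*(t))$ in both cases. This is precisely the conclusion of Theorem~\ref{thm:interface}, with H2 and the replayed H0 signal playing the two source realizations.

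The main obstacle is conceptual rather than computational. In the H2 loop, $u_s^*$ depends on $x$ through the feedback $\eps q(z,x)$, so the closed loop is not itself an "input plus grid" IVP. The step that needs care is to observe that once the trajectory is realized, $u_s^*$ is a known function of time. The feedback then affects only which $u_s^*$ occurs, not whether $x^*$ solves the open-loop equation with that input. I would state explicitly that uniqueness is invoked for the open-loop grid IVP with this particular input, and not for the closed loop. I would also note that the replay claim covers a single trajectory only; it says nothing about behavior under other initial conditions or interventions.
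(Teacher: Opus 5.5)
Your proposal is correct and takes essentially the same route as the paper, whose proof is the one-line application of Theorem~\ref{thm:interface} to the realized signal $u_s^*$ and its prescribed replay. Your added step makes that application explicit: the closed-loop grid trajectory $x^*$ solves the open-loop IVP driven by the fixed time function $u_s^*$, so uniqueness forces the H0 replay to coincide with it, and the outputs $h(x^*(t),u_s^*(t))$ then agree.
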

\begin{proof}
Apply Theorem~\ref{thm:interface} to the realized signal and its prescribed replay.
\end{proof}
This equivalence is limited to a single realized trajectory and need not
hold under different initial conditions or interventions. The H0 replay
need not be a fixed-frequency sinusoid or realizable by the stipulated H1
model. Across experiments, a common source model with shared parameters can
constrain replay and make grid-to-source feedback distinguishable, whereas
allowing an unrestricted H0 waveform for each record preserves the
ambiguity. The later H1/H2 studies therefore impose specified source
families.

This replay ambiguity limits inference from grid-side measurements alone.
A PMU-based method may identify an oscillatory interface, localize a bus or
device, or reject a baseline modal model, but cannot determine whether the
same one-way waveform arose from a prescribed external input, an internal
autonomous oscillator, or another hidden realization. Resolving that
distinction requires source-side measurements, interventions, or additional
structural assumptions.

\vspace{-0.2in}
\subsection{{Conditional Feedback Evidence and Local Sensitivities}}
At a specified equilibrium, let the linearized grid dynamics and output
depend on the source-interface signal $u_s$ as
\begin{align}
    \dot x &= Ax+B_u u_s,\\
    y &= Hx+J u_s,
\end{align}
where $B_u$ maps the source-interface signal into the grid dynamics and
$J$ represents any direct dependence of the measured outputs on that
interface signal. Let the source output be $u_s=Lz$, where $L$ maps the source state $z$ to the grid interface. For H2, the
linearized source dynamics are $\dot z=Dz+\eps Cx.$
Substituting $u_s=Lz$ gives
\begin{align}
    \dot x &= Ax+Bz, \qquad B=B_uL, \label{eq:linx}\\
    \dot z &= Dz+\eps Cx, \label{eq:linz}\\
    y &= Hx+JLz, \label{eq:liny}
\end{align}
with
\begin{equation}
    M(\eps)=
    \begin{bmatrix}
        A & B\\
        \eps C & D
    \end{bmatrix}.
    \label{eq:M}
\end{equation}
Thus, $B$ represents source-to-grid influence, whereas $\eps C$
represents grid-to-source feedback. The corresponding augmented output
map is $H_{\rm aug}=[H\;JL]$. The case $J=0$ means that the measured
outputs have no direct feedthrough from $u_s$; it is a special case
rather than a general assumption for voltage or power measurements.

\emph{Remark (Coupling normalization):}\label{rem:normalization}
If $\eps$ and $C$ enter the model only through $\eps C$, their
separate scales cannot be inferred without additional information.
We therefore fix the feedback normalization before estimating $\eps$.
This convention removes the factorization ambiguity but does not
itself establish identifiability of the feedback from grid-side data. Ambiguity in the factors does not by itself imply ambiguity about whether their effective product is nonzero.

For the nonlinear feedback law in \eqref{eq:H2z}, the local matrix is
$C=\partial q/\partial x\vert_{(z_*,x_*)}$ at a reference equilibrium
$(z_*,x_*)$. At that fixed equilibrium, $C$ does not change with
disturbance amplitude. Along a nonlinear trajectory, however,
$C(t)=\partial q/\partial x\vert_{(z(t),x(t))}$ may vary, and a constant
effective gain fitted to a finite-amplitude record may depend on the
record. The continuation model \eqref{eq:M} holds $A$, $B$, $C$, and
$D$ fixed while varying $\eps$; relinearization at a different
operating point can change these matrices. The present study does
not identify a general state-dependent feedback law.

After fixing the normalization, the inference question is whether
feedback can be distinguished from uncertainty in the assumed source
dynamics \cite{BellmanAstrom1970}. First consider the case with
only $\eps$ unknown. Let $X=[x^\top,z^\top]^\top$,
$Y=H_{\rm aug}X$, and
\begin{equation}
    S_\eps(t)=\frac{\partial Y(t;\eps)}{\partial\eps}. \label{eq:sens}
\end{equation}
The sensitivity state satisfies
\begin{align}
    \dot X_\eps&=M(\eps)X_\eps+M_\eps X,\label{eq:sensstate}\\
    S_\eps&={H_{\rm aug}X_\eps},\qquad X_\eps(0)=0,\label{eq:sensout}
\end{align}
where \(M_\eps=\left[\begin{smallmatrix}0&0\\C&0\end{smallmatrix}\right]\).
These equations assume an $\eps$-independent output and initialization,
as in the calibrated scalar-feedback runs. A parameter-dependent output
adds $(\partial_\eps H_{\rm aug})X$, and a parameter-dependent initial
state requires its initial sensitivity.

\begin{proposition}[Sufficient local identifiability]\label{prop:localid}
Assume all matrices, coordinates, and initial conditions are fixed except the scalar \(\eps\), and \(Y(t;\eps)\) is continuously differentiable. If one measured component satisfies \(\partial Y_k(t_*;\eps)/\partial\eps\neq0\) at \(\eps=\eps_0\), then \(\eps\) is locally identifiable at \(\eps_0\) from the noiseless trajectory.
\end{proposition}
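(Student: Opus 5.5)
The plan is to establish local injectivity of the map \(\eps\mapsto Y_k(t_*;\eps)\) near \(\eps_0\). By local identifiability I mean that some neighborhood \(U\) of \(\eps_0\) has the following property: if \(\eps_1\in U\) produces the same noiseless trajectory as \(\eps_0\), then \(\eps_1=\eps_0\).

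It suffices to examine the single scalar \(\phi(\eps):=Y_k(t_*;\eps)\). Equal trajectories imply equal values of \(\phi\). So if \(\phi\) is injective on \(U\), the whole trajectory map is injective on \(U\) as well.

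\textbf{Step 1 (regularity).} By hypothesis \(Y(t;\eps)\) is continuously differentiable in \(\eps\), so \(\phi\) is \(C^1\) near \(\eps_0\). In the linear setting this also follows directly. The solution \(X(t;\eps)=e^{M(\eps)t}X(0)\) depends analytically on \(\eps\), since \(M(\eps)\) is affine in \(\eps\). The quantities \(H_{\rm aug}\) and \(X(0)\) are \(\eps\)-independent by assumption. The derivative \(\phi'(\eps)\) is the \(k\)-th component of \(S_\eps(t_*)\) from \eqref{eq:sensstate}--\eqref{eq:sensout}.

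\textbf{Step 2 (nonvanishing derivative persists).} We have \(\phi'(\eps_0)\neq0\), and \(\phi'\) is continuous. Hence there is \(\delta>0\) such that \(\phi'\) keeps a constant sign on \(U=(\eps_0-\delta,\eps_0+\delta)\), intersected with the admissible set \(\eps\ge0\) if \(\eps_0=0\). The one-sided case is handled identically.

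\textbf{Step 3 (injectivity).} Suppose \(\eps_1\in U\) with \(\phi(\eps_1)=\phi(\eps_0)\) and \(\eps_1\neq\eps_0\). The mean value theorem gives some \(\xi\) strictly between them with \(\phi'(\xi)=0\), which contradicts Step 2. Hence \(\phi\) is strictly monotone on \(U\), so no other \(\eps\in U\) reproduces even the single sample \(Y_k(t_*)\), let alone the full trajectory. This also gives a local inverse, by the inverse function theorem in one dimension.

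The only real obstacle is definitional rather than technical. I must state clearly that identifiability here is conditional on all other matrices, coordinates, and initial conditions being fixed, and that it is local and noiseless. I would add a remark that the result says nothing about global uniqueness or robustness to nuisance parameters. It also does not address the normalization of \(C\), which was fixed per the coupling-normalization remark.
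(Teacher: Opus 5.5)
Your proof is correct and follows the paper's argument. Like the paper, you reduce to the scalar map $\eps\mapsto Y_k(t_*;\eps)$ and use continuity of its nonzero derivative, one-sided at $\eps_0=0$, to get local injectivity. Your mean-value-theorem step is simply the one-dimensional content of the inverse function theorem that the paper cites.
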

\begin{proof}
The scalar map $\eps\mapsto Y_k(t_*;\eps)$ has nonzero derivative and
is therefore locally one-to-one by the inverse function theorem; at the
boundary $\eps_0=0$, the same conclusion holds on the one-sided
admissible domain $\eps\geq0$ by continuity of the derivative.
\end{proof}

The proposition is sufficient, not necessary: vanishing first-order sensitivity may coexist with higher-order parameter dependence. More importantly, if source order, \(D\), \(C\), and source initial states are unknown, hidden-state similarities and compensating parameter changes can preserve the same grid output. The numerical study must therefore state which source structure is assumed before estimating \(\eps\).

When other source parameters are also unknown, nonzero sensitivity to
$\eps$ alone does not guarantee that $\eps$ can be resolved separately,
because changes in those parameters may produce similar measurement
changes. Joint local resolution must therefore be assessed from the
combined parameter-sensitivity matrix. This issue is examined explicitly
in the multi-parameter studies of Section~VI-D.

For a sampled and stacked measurement vector with noise covariance
$\Sigma$, define the whitened parameter sensitivity as
$\bar{s}_{\theta}=\Sigma^{-1/2}s_{\theta}$. Let $\bar{S}_{\nu}=
\begin{bmatrix}
\bar{s}_{\nu_1} & \cdots & \bar{s}_{\nu_p}
\end{bmatrix}
$
collect the corresponding sensitivities to the nuisance parameters
$\nu=[\nu_1,\ldots,\nu_p]^\top$. Define
\begin{equation}
r_{\eps}
=
\left(I-P_{\bar{S}_{\nu}}\right)\bar{s}_{\eps},
\qquad
P_{\bar{S}_{\nu}}
=
\bar{S}_{\nu}\bar{S}_{\nu}^{\dagger},
\label{eq:projected_feedback_sensitivity}
\end{equation}
where $(\cdot)^\dagger$ denotes the Moore--Penrose pseudoinverse.

\emph{Remark:}
If $r_{\eps}\neq 0$, the first-order measurement change produced by
$\eps$ is not contained in the local sensitivity subspace generated by
the nuisance parameters. Hence, within the assumed model and
measurement map, feedback is locally distinguishable to first order
from perturbations of those nuisance parameters. If $r_{\eps}=0$, the
feedback effect is first-order confounded with nuisance-parameter
variation; higher-order information may nevertheless distinguish the
models. This is a local first-order condition and does not imply global
or structural identifiability.

\vspace{-0.2in}
\subsection{Finite-Data Feedback Detectability}
For sampled PMU data
\begin{equation}
    y_k=Y(t_k;\eps)+\eta_k,\qquad \eta_k\overset{\mathrm{iid}}{\sim}\mathcal N(0,R),
\end{equation}
a local scalar Fisher information is calculated by
\begin{equation}
    \mathcal I_\eps=\sum_{k=1}^{N}S_\eps(t_k)^\top R^{-1}S_\eps(t_k). \label{eq:fisher}
\end{equation}
Under standard regularity assumptions, \(\operatorname{var}(\widehat\eps)\ge \mathcal I_\eps^{-1}\) for an unbiased estimator \cite{Ljung1999}. Structural identifiability therefore does not imply practical detection. The test
\begin{equation}
    \mathcal{T}_0:\mathcal{M}_{0}^{\mathrm{cal}}(\eps=0),\qquad \mathcal{T}_+:\mathcal{M}_{+}^{\mathrm{cal}}(\eps>0) \label{eq:hyp}
\end{equation}
compares the restricted zero-feedback family \(\mathcal M_{0}^{\rm cal}\) with its feedback-augmented extension \(\mathcal M_{+}^{\rm cal}\) under the stated source parameterization. Its false-selection probability at \(\eps=0\) and detection probability at \(\eps>0\) depend on SNR, window length, sampling, sensor map, source amplitude, detuning, and model uncertainty. Failure to reject \(\mathcal{T}_0\) does not establish absence of physical feedback. There is no universal boundary between weak and strong coupling.

When all source parameters other than $\eps$ are fixed, a numerically
resolved difference between the $\eps=0$ and $\eps>0$ responses provides
evidence for feedback. If other source parameters are also unknown,
both the zero-feedback and feedback models must be fitted over the same
admissible ranges of those parameters. A natural comparison is the
reduction in weighted fitting error,
\begin{equation}
\Lambda = J(0)-J(\hat{\eps}), \label{eq:Lambda}    
\end{equation}
where $J(0)$ is the best fit achievable with zero feedback and
$J(\hat{\eps})$ is the best fit of the feedback-augmented calibrated family when $\eps\geq0$ is also allowed to vary.
The weighting accounts for the different noise levels and units of the
measured channels. A positive improvement can support feedback detection,
but need not uniquely determine the other source parameters. Because the
response varies continuously with $\eps$, sufficiently weak feedback may
remain indistinguishable from zero feedback with finite noisy data.

\section{Spectral Consequences of Bidirectional Source–Grid Feedback}
This section specializes standard block-matrix perturbation and two-mode interaction results in \cite{StewartSun1990,Heiss2012,Dobson2001} to source--grid feedback.

\vspace{-0.2in}
\subsection{One-Way Spectrum and Feedback Sensitivity}
At zero feedback, \(M(0)\) is block upper triangular.
\begin{lemma}[One-way spectrum]\label{lem:spectrum}
For \(\eps=0\),
\begin{equation}
    \spec(M(0))=\spec(A)\cup\spec(D), \label{eq:specunion}
\end{equation}
counting algebraic multiplicities.
\end{lemma}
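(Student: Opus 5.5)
The plan is to compute the characteristic polynomial of $M(0)$ directly from its block upper-triangular structure. Setting $\eps=0$ in \eqref{eq:M} gives
\begin{equation*}
    M(0)=\begin{bmatrix} A & B\\ 0 & D\end{bmatrix},
\end{equation*}
where $A\in\C^{n\times n}$ and $D\in\C^{m\times m}$ are square and $B$ is $n\times m$. For any $\lambda\in\C$, the matrix $\lambda I-M(0)$ keeps the same block upper-triangular form. Its diagonal blocks are $\lambda I_n-A$ and $\lambda I_m-D$, and its off-diagonal block is $-B$.

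The key step is the determinant identity for block triangular matrices with square diagonal blocks:
\begin{equation*}
    \det(\lambda I-M(0))=\det(\lambda I_n-A)\,\det(\lambda I_m-D).
\end{equation*}
I would justify this in one of two standard ways. The first is Laplace expansion along the first column, with induction on $m$. The second is the factorization
\begin{equation*}
    \begin{bmatrix} P & Q\\ 0 & R\end{bmatrix}=\begin{bmatrix} I & 0\\ 0 & R\end{bmatrix}\begin{bmatrix} I & Q\\ 0 & I\end{bmatrix}\begin{bmatrix} P & 0\\ 0 & I\end{bmatrix},
\end{equation*}
in which the middle factor is unit triangular and so has determinant $1$. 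This factorization is an exact algebraic identity and holds even when $R$ is singular, so no invertibility assumption is needed.

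It follows that the characteristic polynomial of $M(0)$ is the product of the characteristic polynomials of $A$ and $D$. The roots of a product of monic polynomials are the roots of the factors, and the multiplicities add. Hence every eigenvalue $\lambda$ of $M(0)$ has algebraic multiplicity equal to its multiplicity as an eigenvalue of $A$ plus its multiplicity as an eigenvalue of $D$. This is exactly \eqref{eq:specunion}, read as a multiset union.

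I do not expect a real obstacle. The only point needing care is the phrase "counting algebraic multiplicities." If $A$ and $D$ share an eigenvalue, as at exact source--grid resonance, the union must be taken as a multiset. The correct statement is about algebraic multiplicities, not geometric ones: the coupling block $B$ may produce a nontrivial Jordan block at a shared eigenvalue even when $\eps=0$. I would state this explicitly, since it anticipates the two-mode interaction analysis that follows.
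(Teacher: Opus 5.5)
Your proof is correct and follows the paper's own argument, which is the single factorization $\det(\lambda I-M(0))=\det(\lambda I-A)\det(\lambda I-D)$; your three-factor decomposition is a valid justification of that factorization with no invertibility assumption, and your multiset and Jordan-block remarks are accurate. One small slip: Laplace expansion along the first column shrinks the upper-left block, so that induction runs on $n$, not $m$ (expanding along the last row would give induction on $m$).
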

\begin{proof}
The characteristic determinant factors as \(\det(\lambda I-A)\det(\lambda I-D)\).
\end{proof}

The presence of \(\spec(D)\) in an augmented autonomous realization does not make those eigenvalues conventional modes of the original grid matrix \(A\), nor does it remove the H0--H1 observational equivalence. Let \(\lambda\in\spec(A)\setminus\spec(D)\) be simple, with right and left eigenvectors \(v_A,w_A\). Standard eigenvalue perturbation gives \cite{StewartSun1990}:
\begin{proposition}[Grid-mode feedback sensitivity]\label{prop:gridshift}
\begin{equation}
\left.\frac{d\lambda(\eps)}{d\eps}\right|_{0}
=\frac{w_A^*B(\lambda I-D)^{-1}Cv_A}{w_A^*v_A}. \label{eq:gridshift}
\end{equation}
For a simple source eigenvalue \(\mu\in\spec(D)\setminus\spec(A)\),
\begin{equation}
\left.\frac{d\mu(\eps)}{d\eps}\right|_{0}
=\frac{w_D^*C(\mu I-A)^{-1}Bv_D}{w_D^*v_D}. \label{eq:sourceshift}
\end{equation}
\end{proposition}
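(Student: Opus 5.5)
Since $\lambda$ is a simple eigenvalue of $M(0)$, I will lift the eigenvectors of $A$ to right and left eigenvectors of the block triangular matrix $M(0)$ and then apply the standard first-order perturbation formula $\lambda'(0)=W^*M_\eps V/(W^*V)$ \cite{StewartSun1990}, with $M_\eps=\left[\begin{smallmatrix}0&0\\C&0\end{smallmatrix}\right]$ as in \eqref{eq:sensstate}.

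\textbf{Simplicity and differentiability.} By Lemma~\ref{lem:spectrum} the characteristic polynomial of $M(0)$ factors as $\det(\lambda I-A)\det(\lambda I-D)$. A simple root of the first factor that is not a root of the second is therefore a simple root of the product, so $\lambda$ is a simple eigenvalue of $M(0)$. Since $M(\eps)$ is affine in $\eps$, the implicit function theorem applied to $\det(\lambda I-M(\eps))=0$ gives a differentiable branch $\lambda(\eps)$. The same holds for $\mu\in\spec(D)\setminus\spec(A)$.

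\textbf{Grid mode.} For the right eigenvector I take $V=[v_A^\top,0]^\top$; the first block row gives $Av_A=\lambda v_A$ and the second row vanishes because $\eps=0$. For the left eigenvector I take $W=[w_A^*,\,\xi^*]^*$ and require $W^*M(0)=\lambda W^*$. The first block column gives $w_A^*A=\lambda w_A^*$, which already holds. The second block column gives $w_A^*B+\xi^*D=\lambda\xi^*$, so
\[
\xi^*=w_A^*B(\lambda I-D)^{-1}.
\]
This inverse exists because $\lambda\notin\spec(D)$. Then $W^*V=w_A^*v_A$, which is nonzero by simplicity of $\lambda$ as an eigenvalue of $A$. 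Also $W^*M_\eps V=\xi^*Cv_A$, since only the lower-left block is nonzero. Substituting into the perturbation formula yields \eqref{eq:gridshift}.

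\textbf{Source mode.} The roles of the two eigenvectors swap. The left eigenvector is $W=[0,\,w_D^*]^*$, because the second block column of $W^*M(0)$ is $w_D^*D=\mu w_D^*$ and the first is zero. The right eigenvector is $V=[\zeta^\top,v_D^\top]^\top$, where the first block row requires $A\zeta+Bv_D=\mu\zeta$, so
\[
\zeta=(\mu I-A)^{-1}Bv_D.
\]
Then $W^*V=w_D^*v_D\neq0$ and $W^*M_\eps V=w_D^*C\zeta$, which gives \eqref{eq:sourceshift}.

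\textbf{Main obstacle.} The main point needing care is the justification that $\lambda$ (respectively $\mu$) remains simple in $M(0)$, so that the perturbation formula and the nonvanishing of $W^*V$ apply. The non-overlap assumptions $\lambda\notin\spec(D)$ and $\mu\notin\spec(A)$ handle both: they guarantee simplicity through the factorization in Lemma~\ref{lem:spectrum}, and they make the resolvent inverses in the eigenvector construction well defined. Everything else is block algebra.
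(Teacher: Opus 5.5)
Your proof is correct and follows the same route as the paper, which justifies the proposition only by citing standard first-order perturbation theory for a simple eigenvalue. Your explicit lifting of $v_A,w_A$ (respectively $v_D,w_D$) to eigenvectors of the block-triangular $M(0)$ carries out the computation that the citation leaves implicit. The resolvent correction enters the left eigenvector for the grid mode and the right eigenvector for the source mode, and your simplicity argument via the factorization in Lemma~\ref{lem:spectrum} is also sound.
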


The resolvents in \eqref{eq:gridshift}--\eqref{eq:sourceshift} show that the effect of small feedback depends on source--grid modal separation and nonnormal modal geometry, not on \(\eps\) alone.

\subsection{{Two-Mode Source–Grid Interaction}}
Retaining one grid mode and one source mode gives the approximation
\begin{equation}
    M_r(\eps)=
    \begin{bmatrix}
        \lambda_g & b\\
        \eps c & \lambda_s
    \end{bmatrix},
    \label{eq:Mr}
\end{equation}
where, under compatible biorthogonal normalization,
$b=w_g^*Bv_s$ and $c=w_s^*Cv_g$, so that $bc$ is the effective
source--grid loop product for the selected mode pair. This two-mode
model neglects the influence of other system modes and is therefore
valid only when their effect on the selected pair is sufficiently small.
Theorem~2 below is exact for the reduced matrix \eqref{eq:Mr}, not in
general for the full system.

\begin{theorem}[Two-mode source--grid splitting]\label{thm:split}
The eigenvalues of \eqref{eq:Mr} are
\begin{equation}
\lambda_{\pm}=\frac{\lambda_g+\lambda_s}{2}
\pm\sqrt{\left(\frac{\lambda_g-\lambda_s}{2}\right)^2+\eps bc}. \label{eq:lpm}
\end{equation}
\end{theorem}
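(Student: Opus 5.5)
The plan is to compute the characteristic polynomial of the $2\times2$ matrix \eqref{eq:Mr} directly and solve the resulting quadratic. This is exact for the reduced model, so no perturbation argument is needed.

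First, expand
\[
\det(\lambda I-M_r(\eps))=(\lambda-\lambda_g)(\lambda-\lambda_s)-\eps bc
=\lambda^2-(\lambda_g+\lambda_s)\lambda+\lambda_g\lambda_s-\eps bc .
\]
This shows that the trace $\lambda_g+\lambda_s$ is independent of $\eps$. The determinant $\lambda_g\lambda_s-\eps bc$ carries all of the dependence on feedback, and it enters only through the loop product $\eps bc$. This is consistent with the coupling-normalization remark: the separate scales of $\eps$ and $c$ do not matter.

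Second, apply the quadratic formula over $\C$. The roots are
\[
\lambda_\pm=\tfrac12(\lambda_g+\lambda_s)\pm\tfrac12\sqrt{(\lambda_g+\lambda_s)^2-4(\lambda_g\lambda_s-\eps bc)}.
\]
Simplify the discriminant using
\[
(\lambda_g+\lambda_s)^2-4\lambda_g\lambda_s=(\lambda_g-\lambda_s)^2 .
\]
Then pull the factor $\tfrac12$ inside the root, which gives $\left(\tfrac{\lambda_g-\lambda_s}{2}\right)^2+\eps bc$ and hence \eqref{eq:lpm}.

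Third, address the only subtle point, which is the meaning of the complex square root. The expression \eqref{eq:lpm} should be read as the pair of both square roots of the complex number $\left(\frac{\lambda_g-\lambda_s}{2}\right)^2+\eps bc$. The formula is then independent of the branch chosen, because swapping branches merely exchanges $\lambda_+$ and $\lambda_-$. When the radicand vanishes, the two roots coincide, with algebraic multiplicity two. This is the exceptional-point case cited from \cite{Heiss2012}.

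As consistency checks, I would note the following:
\begin{enumerate}[leftmargin=*,label=\roman*)]
\item At $\eps=0$, the formula returns $\{\lambda_g,\lambda_s\}$, in agreement with Lemma~\ref{lem:spectrum}.
\item For $\lambda_g\neq\lambda_s$, a first-order expansion of the root recovers the shift $\eps bc/(\lambda_g-\lambda_s)$ of the grid eigenvalue. This is the two-mode specialization of Proposition~\ref{prop:gridshift}, with $b$ and $c$ taken in biorthogonal coordinates.
\end{enumerate}
I do not expect any step to be a real obstacle; the branch convention in the third step is the only point that needs care.
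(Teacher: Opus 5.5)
Your proposal is correct and uses the same argument as the paper, which likewise writes the characteristic equation $(\lambda-\lambda_g)(\lambda-\lambda_s)-\eps bc=0$ and applies the quadratic formula. Your extra remarks go beyond what the paper states but are accurate: the complex square root read as two-valued, and the consistency checks against Lemma~\ref{lem:spectrum} and the weak-coupling expansion \eqref{eq:far} (one refinement: a vanishing radicand gives a defective exceptional point unless $M_r(\eps)$ is a scalar matrix, i.e., $\lambda_g=\lambda_s$ and $b=\eps c=0$).
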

\begin{proof}
The characteristic equation is \((\lambda-\lambda_g)(\lambda-\lambda_s)-\eps bc=0\); applying the quadratic formula yields \eqref{eq:lpm}.
\end{proof}

For \(\Delta=\lambda_g-\lambda_s\neq0\), define
\begin{equation}
    \chi=\frac{|\eps bc|}{|\Delta|^2}. \label{eq:chi}
\end{equation}
When \(\chi\ll1\), the branch connected to \(\lambda_g\) satisfies
\begin{equation}
    \lambda_g(\eps)=\lambda_g+\frac{\eps bc}{\Delta}
    +\mathcal O\!\left(\frac{\eps^2|bc|^2}{|\Delta|^3}\right). \label{eq:far}
\end{equation}
Within a valid two-mode description, $\chi=O(1)$ calls for treating the eigenvalues and eigenvectors jointly. It does not quantify physical source/grid energy shares. At exact coincidence, \(\lambda_g=\lambda_s\), and nonzero \(bc\), the splitting is proportional to \(\sqrt{\eps bc}\), so an expansion analytic in \(\eps\) fails; \(\chi\) is undefined at coincidence.

Although $b$ and $c$ individually depend on the normalization of the
retained modal vectors, their product $bc$, and hence $\chi$, is invariant
under reciprocal modal scaling. This invariance does not make $\chi$ a
universal coupling index: it characterizes only the selected mode pair
within a valid two-mode approximation, and its interpretation may change
if important modes are omitted. Open-loop resonance under H0 or H1
amplifies the response without changing the grid spectrum, whereas H2
feedback can move the interconnected eigenvalues and rotate their
eigenvectors.

\section{From Forced/Natural Labels to Coupling-Aware Interpretation}
\subsection{A Coupling-Aware Diagnostic Hierarchy}

The preceding results show that a sustained oscillation should not be
classified from frequency or mode-shape similarity alone. In practice,
different questions must be separated: whether an oscillatory component
is present, whether it is consistent with a known grid mode, where it
enters the grid, and whether the grid dynamically feeds back into the
source. These questions require different models and measurements, and
confounding them can lead to an incorrect forced/natural interpretation.
A hierarchy of coupling-aware interpretation is summarized in Table~\ref{tab:interpretation} to interpret a given observed oscillation and specify the evidence needed for the interpretation.

A practical diagnostic sequence is: (i) detect and characterize the
sustained component; (ii) compare it with the best available grid modal
model; (iii) localize the injection/interface using existing methods;
(iv) specify admissible one-way and feedback source models, including
uncertainty in source parameters; (v) test for grid-to-source feedback
using numerically resolved response differences and calibrated
measurement-model comparisons; and (vi) report detection confidence as
a function of SNR, observation window, and detuning. A spectral peak
alone should not be used to infer hidden source physics. Detecting return
influence supports a coupled H2 interpretation, but does not by itself
identify hidden-source participation factors or all modes of the
augmented system. Modal analysis remains useful for H1 forced responses,
whereas limit cycles and strongly nonlinear behavior require appropriate
nonlinear analysis.

\begin{table}[!t]
\centering
\caption{Coupling-Aware Interpretation of an Observed Oscillation}
\label{tab:interpretation}
\scriptsize
\setlength{\tabcolsep}{2pt}
{\renewcommand{\arraystretch}{1.4}
\begin{tabular}{@{}P{0.29\columnwidth}P{0.68\columnwidth}@{}}
\toprule
Interpretation & Meaning and implication\\
\midrule
Grid natural mode &
Mode of the baseline grid model; compare frequency, damping, and mode shape.\\

One-way-equivalent forcing &
H0 or H1 with the same interface waveform; grid-side data alone cannot
distinguish their source realizations.\\

Coupled-source oscillation &
H2 with detected grid-to-source feedback under the stated source-model
and measurement assumptions; source-parameter recovery is separate.\\

Potential strong modal interaction &
$\chi=O(1)$ in a valid two-mode approximation; the selected grid and
source modes may need joint treatment.\\

Nonlinear self-sustained oscillation &
Limit-cycle, frequency-pulling, or locking behavior; equilibrium
eigenanalysis alone is insufficient.\\
\bottomrule
\end{tabular}
}
\end{table}

\vspace{-0.2in}
\subsection{Nonlinear Self-Sustained Sources}
The preceding analysis uses linearized source and grid models to establish
the H1--H2 feedback boundary and its modal consequences. However, an
autonomous oscillation source need not be linear or oscillate about an
equilibrium. This subsection extends the proposed framework to nonlinear self-sustained oscillation sources that can sustain a stable limit cycle without a prescribed periodic input. The following Van der Pol-type source
provides a concrete nonlinear example of H1 and is used later in the
validation studies to test the proposed coupling-aware framework:
\begin{align}
    \dot z_1&=z_2,\label{eq:vdp1}\\
    \dot z_2&=\mu(1-z_1^2)z_2-\omega_0^2z_1+\eps q(x),\label{eq:vdp2}\\
    u_s&=k_z z_1.\label{eq:vdp3}
\end{align}
For $\mu>0$, $\omega_0>0$, and a nonzero source initial state, the source
is autonomous and approaches a self-sustained oscillation when
$\eps=0$, corresponding to a nonlinear H1 source. When $\eps>0$, the term
$q(x)$ introduces grid-to-source feedback and the source becomes H2. Such
feedback may produce frequency pulling, phase locking, entrainment, or
amplitude changes \cite{Strogatz2015,Kuramoto1984}. This example illustrates
that sustained oscillations can arise from nonlinear limit-cycle dynamics,
so equilibrium eigenanalysis alone is not always sufficient for
classification. Moreover, synchrony alone does not establish feedback,
because a damped grid may also follow a one-way oscillatory source.

\vspace{-0.2in}
\subsection{Scope and Limits of Coupling-Aware Diagnosis}

The coupling-aware interpretation has both modeling and inference limits.
First, representing an exogenous signal generator as additional autonomous
states does not by itself make the source physically internal to the power
system; the choice of system boundary remains a modeling assumption.
Explicitly time-varying or stochastic sources may also fall outside the
finite-dimensional deterministic source models considered here. Second,
even when the H1--H2 framework applies, grid-to-source feedback may be too
weak or poorly observed to distinguish from one-way forcing using finite
grid-side measurements. Thus, the framework distinguishes the dynamical
question of whether feedback is present from the inference question of
whether that feedback can be detected under the available model,
measurements, and numerical resolution.

\section{Validation on a Two-Area System}
\subsection{Grid-Source Modeling and Measurements}
Simulations were conducted by Powertech TSAT on its accompanying Kundur two-area system model, which contains 11 buses, 16 branches, and four GENROU generators (G1 and G2 in area 1 and G3 and G4 in area 2), each equipped with an ESAC4A exciter. G1's native ESAC4A exciter was replaced by a user-defined model (UDM) implementing the same exciter equations and limits, thereby allowing the exciter dynamics to be accessed and modified for the source--grid coupling experiments. In this validation study, the oscillation-source location is therefore known and fixed at the G1 AVR input summing junction within the UDM. The remainder of the baseline model is retained unchanged. Powertech SSAT was used to identify the dominant interarea mode as $\lambda_g=\GridReal+j\GridImag\,\mathrm{s}^{-1}$, or \GridFrequency~Hz with \GridDamping\% damping. The system also has two local modes at 1.159581~Hz and 1.193696~Hz with 10.5782\% damping and 10.4176 \% damping,  respectively.

The equations of the oscillation source are
\begin{align}
 \dot z_1&=z_2,\nonumber\\
 \dot z_2&=-\omega_s^2z_1-2\zeta_s\omega_sz_2+\eps K_0\Delta\omega_1,
 \quad u_s=k_z z_1,\label{eq:tested_source}
\end{align}
where $\Delta\omega_1=(f_1-60)/60$ pu is the normalized G1 rotor-speed deviation, with $f_1$ as the rotor frequency in Hz; it is distinct from the bus-voltage frequency used in the local observation vector, $K_0=1000\,\mathrm{s}^{-2}$, and $k_z=5\times10^{-4}$~pu per source unit. With $K_0$ fixed as the reference feedback gain, $\eps$ is a
dimensionless parameter that scales the feedback strength. The principal linear source has $\zeta_s=0$, $z(0)=(0,\omega_s)$, and both interface directions are enabled at $t=1$~s. The source phase is referenced to absolute simulation time. H0 uses a prescribed sine evaluated from an exact binary-step clock; H1 uses the autonomous harmonic equations at $\eps=0$.

Three levels of grid-side observation are considered: 1) The local configuration $\Ylocal=\{f_i,V_i,\theta_i,P_i,Q_i\}_{\mathrm{G1}}$ contains the frequency, voltage magnitude and angle and the active and reactive powers at the G1 terminal bus; 2) The PMU-like multi-generator configuration $\YPMU=\{f_i,V_i,\theta_i,P_i,Q_i\}_{\mathrm{G1-G4}}$ contains the same five channels at all four generator buses; 3) The rich configuration $\Yrich$ contains all available downstream grid-side quantities from the simulation results including 145 TSAT-exported channels, of which 137 satisfy the variance and numerical-resolution criteria and are retained in
$Y_{\rm rich}$; these are grid-side outputs rather than full native-state
measurements. No oscillation-source state enters any observation configuration. These increasingly rich configurations test dependence on available grid-side information.

The case study first addresses question Q1 (H0/H1 equivalence) at the one-way boundary and then
examines question Q2 (H1/H2 distinction) using specified H1/H2 response contrasts and finite noisy
measurements. Additional tests examine how uncertainty in source
parameters affects feedback detection and parameter resolution.
Nonlinear and model-mismatch cases are then used to assess the limits
of these conclusions.

\vspace{-0.2in}
\subsection{Q1: H0/H1 Equivalence}
\begin{figure}[!t]\centering
\includegraphics[width=0.7\columnwidth,height=1.3in]{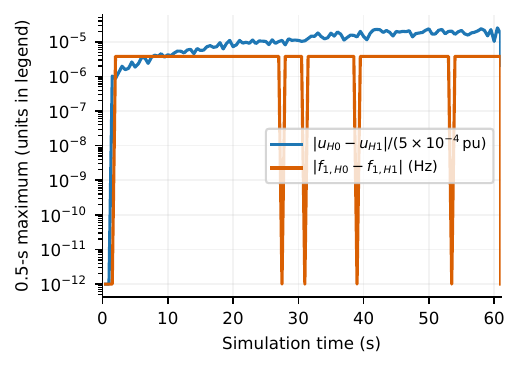}
\vspace{-0.2in}
\caption{Near-resonant H0/H1 discrepancy for the representative
$5\times10^{-4}$-pu, $1/128$-s run. Blue shows the normalized
source-interface error, and orange shows the G1 rotor-speed error in Hz.
The maximum source-interface error is $1.27\times10^{-8}$~pu, below the
$10^{-7}$-pu matching criterion.}\label{fig:equivalence}
\vspace{-0.2in}
\end{figure}

Consider the one-way boundary of the framework. H1 is evaluated with
$\eps=0$, so its autonomous source dynamics are unaffected by the grid,
while H0 is prescribed to generate the corresponding matched interface
signal. Theorem~\ref{thm:interface} predicts exact grid-side equivalence
for matched interface signals; the following comparisons therefore serve
as a numerical verification of this result.

The sources are chosen at 0.3~Hz and \GridFrequency~Hz to represent
off-resonant and near-resonant conditions, with interface amplitudes of
$5\times10^{-4}$ and $10^{-3}$~pu. A $10^{-7}$~pu interface-matching
criterion is imposed before comparing the grid responses. All retained
cases show close H0/H1 agreement across the considered observation
configurations, consistent with the matched-input equivalence of
Theorem~\ref{thm:interface}. Fig.~\ref{fig:equivalence} shows a
representative near-resonant case.

\begin{figure}[!t]\centering
\includegraphics[width=0.7\columnwidth,height=2.in]{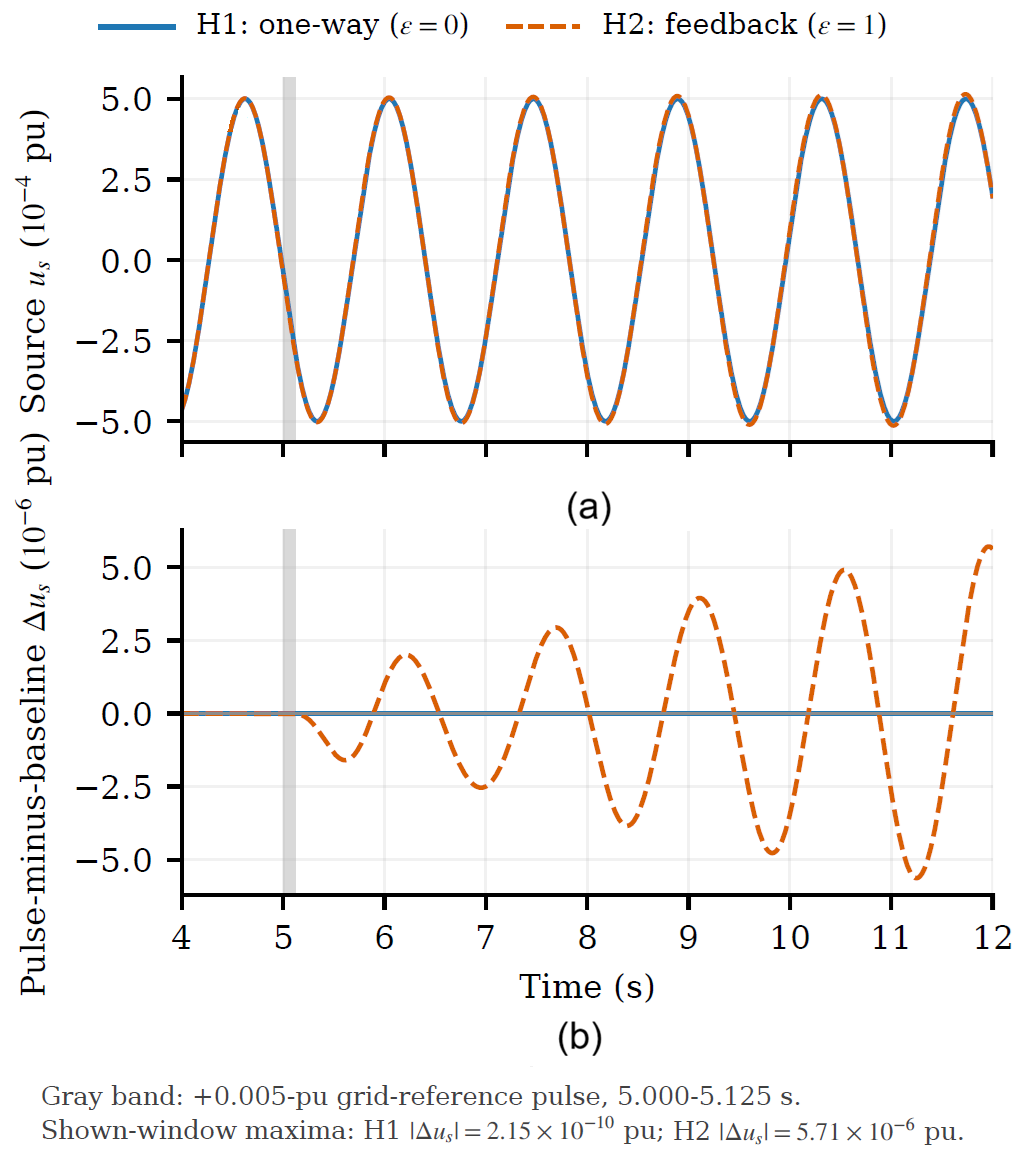}
\caption{Grid-reference intervention in the Kundur study.
(a) H1 and H2 source-interface trajectories during a G1 AVR-reference
pulse. (b) Corresponding pulse-induced responses. The pulse is applied
through the grid to verify the implemented grid-to-source feedback path.}
\vspace{-0.2in}
\label{fig:H2_pulse_intervention}
\end{figure}

\vspace{-0.1in}
\subsection{Q2: H1/H2 Distinction Through Feedback, 
Modal Interaction, and Detuning}

Consider H2 with nonzero grid-to-source feedback, $\eps>0$.
Using the calibrated linear source model of Subsection~VI-A, the study
first verifies the feedback path and then varies $\eps$ and
source--grid detuning to examine how bidirectional coupling changes
the augmented-system modes and their interaction. As shown in
Fig.~\ref{fig:H2_pulse_intervention}, a grid-reference pulse changes
the H2 source trajectory while leaving the one-way H1 source
unchanged within numerical accuracy. The principal coupling sweep includes 
$\eps=$0, 0.001, 0.003, 0.01, 0.03, 0.1, 0.3, 1, 3, 10, beyond which field-voltage limits may be reached during a 61~s simulation.

\begin{figure}
\centering
\includegraphics[width=0.65\columnwidth,height=1.5in]{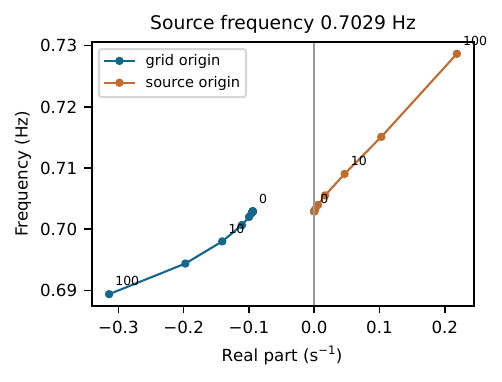}
\vspace{-0.2in}
\caption{Full augmented SSAT eigenvalue continuation as the feedback strength $\eps$ increases. The grid and source branches for the dominant inter-area mode move from their $\eps=0$ locations, illustrating source--grid modal interaction}
\vspace{-0.2in}
\label{fig:modal}
\end{figure}

The linearization of the augmented source-grid model via SSAT has 42 states,
compared with 40 states in the original system. 
Fig.~\ref{fig:modal} tracks
the eigenvalue branches originating from the dominant interarea
grid mode and the source oscillator as $\eps$ increases. At
$\eps=0$, the two branches retain their separate grid and source
origins. As feedback is increased, both branches move in the
complex plane: the grid-origin mode changes in both damping and
frequency, while the initially neutral source-origin mode moves
into the right half-plane.
This example is intended to illustrate feedback-induced eigenvalue migration
and modal interaction, rather than a stable self-sustained oscillation.
Stable interconnected modes, feedback-induced instability, and nonlinear
limit-cycle oscillations are distinct dynamical regimes; the latter is
examined in Section~VI-E using the Van der Pol source.
Thus bidirectional coupling changes
the natural modes of the augmented source--grid system rather
than merely changing the amplitude of a forced response. The
branch motion provides full-system numerical evidence of
source--grid modal interaction, consistent with the two-mode method of Section~IV.

To compare the full-system continuation with the reduced theory of
Section~IV, an effective loop product
$p=bc=\TwoModeProduct$ is fitted from the weak-coupling cases
$\eps=0.003,0.01,0.03$. Agreement at these points is therefore
calibration rather than independent validation. Beyond this fitted
range, the exact two-mode expression \eqref{eq:lpm} remains within the 5\% eigenvalue-shift criterion through approximately $\eps=\TwoModeRange$, providing an out-of-calibration comparison. The increasing error at stronger
coupling is consistent with the influence of modes omitted from the
two-mode reduction.

Detuning studies vary the source frequency around the dominant interarea
mode. Because the grid mode is damped, the relevant separation in
\eqref{eq:chi} is the complex modal distance
$|\lambda_g-\lambda_s|$, not frequency difference alone. With a fixed
physical noise level, moving the source frequency closer to $f_g$
increases the nominal response SNR from \DetuneFarSNR~dB to
\DetuneNearSNR~dB and the nominal feedback-detection probability from
\DetuneFarPower\% to \DetuneNearPower\% for the tested configuration.
Thus, resonance does not universally degrade feedback detection, but
this case does not establish that resonance generally improves
detectability; the result remains configuration dependent, and
frequency-specific numerical qualification is incomplete.

\vspace{-0.1in}
\subsection{Q2: H1/H2 Feedback Detection and Source Uncertainty}
This experiment tests whether finite noisy grid-side measurements favor the specified feedback-augmented family \(\mathcal M_{+}^{\rm cal}\) over its zero-feedback boundary \(\mathcal M_{0}^{\rm cal}\). A positive result provides conditional evidence for H2-type grid-to-source feedback within this assumed source family; it is not discrimination between arbitrary members of the structural classes H1 and H2. 
The linear source family in \eqref{eq:tested_source} is considered in three scenarios. In Scenario 1, $\eps$ is the only unknown parameter, with source frequency, amplitude, phase, damping, and feedback normalization fixed; Scenario 2 additionally treats source amplitude $A_s$ and phase $\phi_s$ as unknown, while Scenario 3 further includes source
frequency $f_s$ and damping $\zeta_s$. For parameter vector \(\theta=[\eps,\;A_s,\;\phi_s,\;f_s,\;\zeta_s]^{\mathsf T}
\), parameter sensitivities and the projected feedback sensitivity $r_{\eps}$ are defined as in \eqref{eq:projected_feedback_sensitivity}.  Scenario 1 provides the calibrated
feedback-detection study, whereas Scenarios 2 and 3 assess local joint parameter
resolution rather than additional detection trials.

\subsubsection{Scenario 1: feedback detection with only $\eps$ unknown} Clean simulation results from TSAT are downsampled to 30 and 60 samples/s with Gaussian measurement noise added at 20--60~dB SNR. Observation windows of 5--60~s and the three measurement configurations are considered. The
coupling strength is sampled at
$\eps\in\{0,0.001,0.003,0.01,0.03,0.1,0.3,1,3,10\}$, with 200 noise realizations for each case. 

In Scenario~1, $\eps$ is the only unknown parameter, so
$r_\eps=\bar s_\eps$ and
$\|r_\eps\|/\|\bar s_\eps\|=1$ by construction. For the stacked measurement vector $d$, the simulated responses at the
listed coupling values define a piecewise-linear interpolant $m(\eps)$
over $0\leq\eps\leq10$. The coupling parameter is estimated by weighted
least squares:
\begin{equation}
J(\eps)=\|d-m(\eps)\|_{\Sigma^{-1}}^2,\qquad
\widehat{\eps}=\operatorname*{arg\,min}_{0\leq\eps\leq10}J(\eps).
\label{eq:eps_est}
\end{equation}
Check $\Lambda=J(0)-J(\widehat{\eps}).$ Feedback is declared when $\Lambda$ exceeds the empirical 95th percentile
of 10,000 independent $\eps=0$ calibration trials. Repeating this test over 200 noise realizations gives the selection probability $P_{\rm sel}$ reported in Table~\ref{tab:modelselection}; its 95\% interval is the corresponding Wilson interval.

\begin{table}[!t]
\centering
\footnotesize
\caption{Scenario~1 results for $\Ylocal$ at 30 samples/s and 20~dB.
$P_{\mathrm{sel}}$ is the empirical probability that the calibrated \(\Lambda\) test selects positive feedback, i.e., \(\mathcal M_{+}^{\rm cal}\), over \(\mathcal M_{0}^{\rm cal}\); intervals are pointwise 95\%
Wilson intervals. Positive-feedback rows are limited to cases for which
the numerical-resolution screen was evaluated.}
\label{tab:modelselection}
\begin{tabular}{ccccc}
\hline
$T$ (s) & $\eps$ & $P_{\mathrm{sel}}$ & 95\% interval & Num. qual. \\
\hline
5  & 0    & 0.055 & [0.031,0.096] & N/A  \\
5  & 0.03 & 0.055 & [0.031,0.096] & Fail \\
5  & 1    & 0.085 & [0.054,0.132] & Fail \\
5  & 10   & 0.975 & [0.943,0.989] & Pass \\
60 & 0    & 0.050 & [0.027,0.090] & N/A  \\
60 & 0.03 & 0.965 & [0.930,0.983] & Fail \\
60 & 1    & 1.000 & [0.981,1.000] & Pass \\
60 & 10   & 1.000 & [0.981,1.000] & Pass \\
\hline
\end{tabular}
\end{table}

\begin{figure}[!t]
\centering
\vspace{-0.1in}
\includegraphics[width=\columnwidth, height=1in]{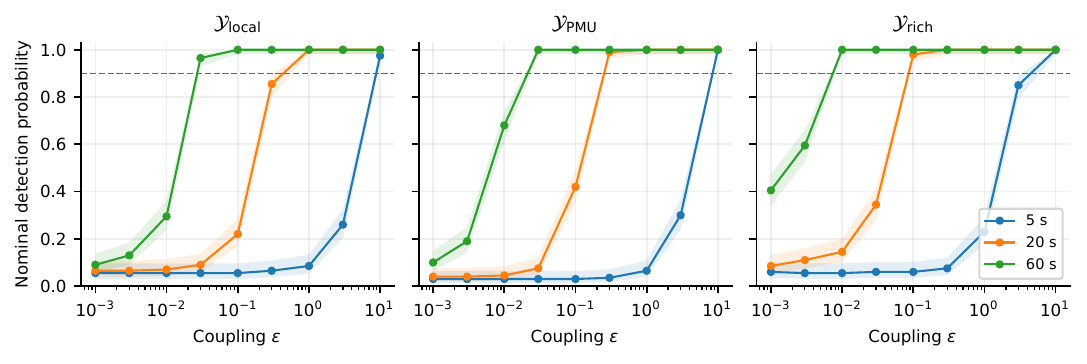}
\vspace{-0.3in}
\caption{Nominal feedback-detection probability for positive $\eps$ under
$\Ylocal$, $\YPMU$, and $\Yrich$ at 20~dB and 30 samples/s. Bands are
pointwise 95\% Wilson intervals. Numerical qualification is assessed
separately from the statistical detection result.}
\label{fig:detection}
\vspace{-0.2in}
\end{figure}

Table~\ref{tab:modelselection} and Fig.~\ref{fig:detection} summarize the
results. Numerical qualification requires the weighted H2--H1 contrast to change by no more than 5\%
between the simulations using $1/128$-s and $1/256$-s timesteps on both estimation and
held-out intervals, and to remain above the estimated
numerical/export resolution level so that the observed difference is not
attributable to solver or data-export precision. Hence, a large nominal detection probability does not by itself
establish physically resolved feedback. Additionally, for $\Ylocal$ at 30 samples/s and 20~dB, the short-window case at
$\eps=\LocalShortEpsilon$ gives
$P_{\mathrm{sel}}=\LocalShortPower\%$, but does not pass the numerical
screen. At $\eps=10$, the detection probability increases to 97.5\% and
the paired-contrast screen is satisfied. For a 60-s window at 40~dB,
the smallest numerically qualified tested coupling is
$\eps=\SmallestQualifiedCoupling$ for all three measurement
configurations. 


\subsubsection{Scenario 2: Unknown source amplitude and phase}

This scenario tests whether source-amplitude and phase variations can
absorb the feedback signature. At 60~s, 30 samples/s, and 40~dB, the
scaled condition numbers are 12.86, 13.55, and 15.15 for
$\Ylocal$, $\YPMU$, and $\Yrich$, respectively, and all three singular
directions exceed the coarse/fine finite-difference discrepancy. The
corresponding residual fractions $\|r_\eps\|/\|\bar s_\eps\|$ are
0.475, 0.466, and 0.457, indicating that the feedback direction remains
locally distinguishable to first order from the nuisance-parameter
sensitivity subspace.

\subsubsection{Scenario 3: Unknown feedback, amplitude, phase, frequency, and damping}
At the same setting, the scaled condition numbers increase to 266.2,
233.9, and 249.2 for $\Ylocal$, $\YPMU$, and $\Yrich$, respectively.
Four of five singular directions exceed the coarse/fine finite-difference
discrepancy, while one remains unresolved. The residual fractions decrease
to 0.0377, 0.0434, and 0.0408, indicating that the feedback direction is
nearly contained in the nuisance-parameter sensitivity subspace and is
therefore only weakly resolved separately.

\vspace{-0.1in}
\subsection{Nonlinear Sources and Adverse Cases}
\label{sec:kundur-adverse}

The preceding H2 studies assume the linear source model in \eqref{eq:tested_source}.
This subsection relaxes that assumption to examine whether the frequency alignment of the grid response with the source alone reveals feedback, 
to consider source-model mismatch, and to provide a
counterexample to a simple frequency-proximity classification rule.

\subsubsection{Synchronization with a nonlinear self-sustained source}

The source model in \eqref{eq:tested_source} is augmented with the
Van der Pol term $\mu(1-z_1^2)z_2$, with
$\mu=0.2,0.5,$ and $1~\mathrm{s}^{-1}$, and tested at detunings of
$-0.1$, $0$, and $+0.1$~Hz. The source-only cases are first verified to
exhibit stable self-sustained oscillations, after which the coupled system
is evaluated for $\eps=0,1,$ and $10$. All cases show close source--grid synchronization over the observation
window, including the one-way cases with $\eps=0$. Thus, a common
source--grid frequency or finite-window synchronization does not by itself
establish grid-to-source feedback; a one-way self-sustained source can
produce the same qualitative behavior.
Bidirectional feedback nevertheless changes the source dynamics. At zero
detuning and $\eps=10$, the source frequency shifts by 4.10--4.26~mHz
across the three nonlinear parameterizations, demonstrating frequency
pulling under feedback. These shifts are case-dependent, not a universal indicator of bidirectional coupling.

\subsubsection{Source-model mismatch}

A source-model mismatch test examines whether an incorrect source family
can produce a false indication of feedback. The true response is generated
by the nonlinear Van der Pol source, while inference uses the linear-source
model of the preceding studies.
For $\mu=1$, zero detuning, $\Ylocal$, 20~s, 30 samples/s, 20~dB, and
true $\eps=0$, \IDFiveSelection\% of 200 trials select
the feedback-augmented candidate family \(\mathcal M_{+}^{\rm cal}\), with mean fitted coupling \IDFiveEstimate. Selection of \(\mathcal M_{+}^{\rm cal}\) alone does not establish physical H2 feedback.
The misspecified linear model also gives a larger held-out weighted
mean-square error, \IDFiveMSE, than the correctly specified nonlinear
oracle model, \IDFiveOracleMSE. Thus, nonlinear source-model mismatch can
be misinterpreted as grid-to-source feedback, so selection of
$\mathcal{M}_2$ alone does not establish physical feedback.

\subsubsection{Adverse frequency-only classification}
Finally, a frequency-proximity rule classifies an oscillation as
``natural'' when $|\hat f-f_g|<0.02$~Hz. For a near-resonant
one-way-equivalent case, the rule classifies \FrequencyFalseNatural\%
of 200 noisy trials as ``natural.'' Thus, proximity to a grid-mode
frequency does not establish bidirectional source--grid dynamics; a
one-way source can produce a mode-like response near resonance.

Together, these cases show that synchronization, frequency proximity,
and improved fit of an assumed feedback model are insufficient by
themselves to establish grid-to-source feedback. Reliable inference
requires an appropriate source model, numerically resolved feedback
signatures, and informative measurements.

\vspace{-0.1in}
\subsection{Synthesis of the Kundur System Results}
\label{sec:kundur-synthesis}

\begin{figure}[!t]
\centering
\includegraphics[width=\columnwidth,height=1.9in]
{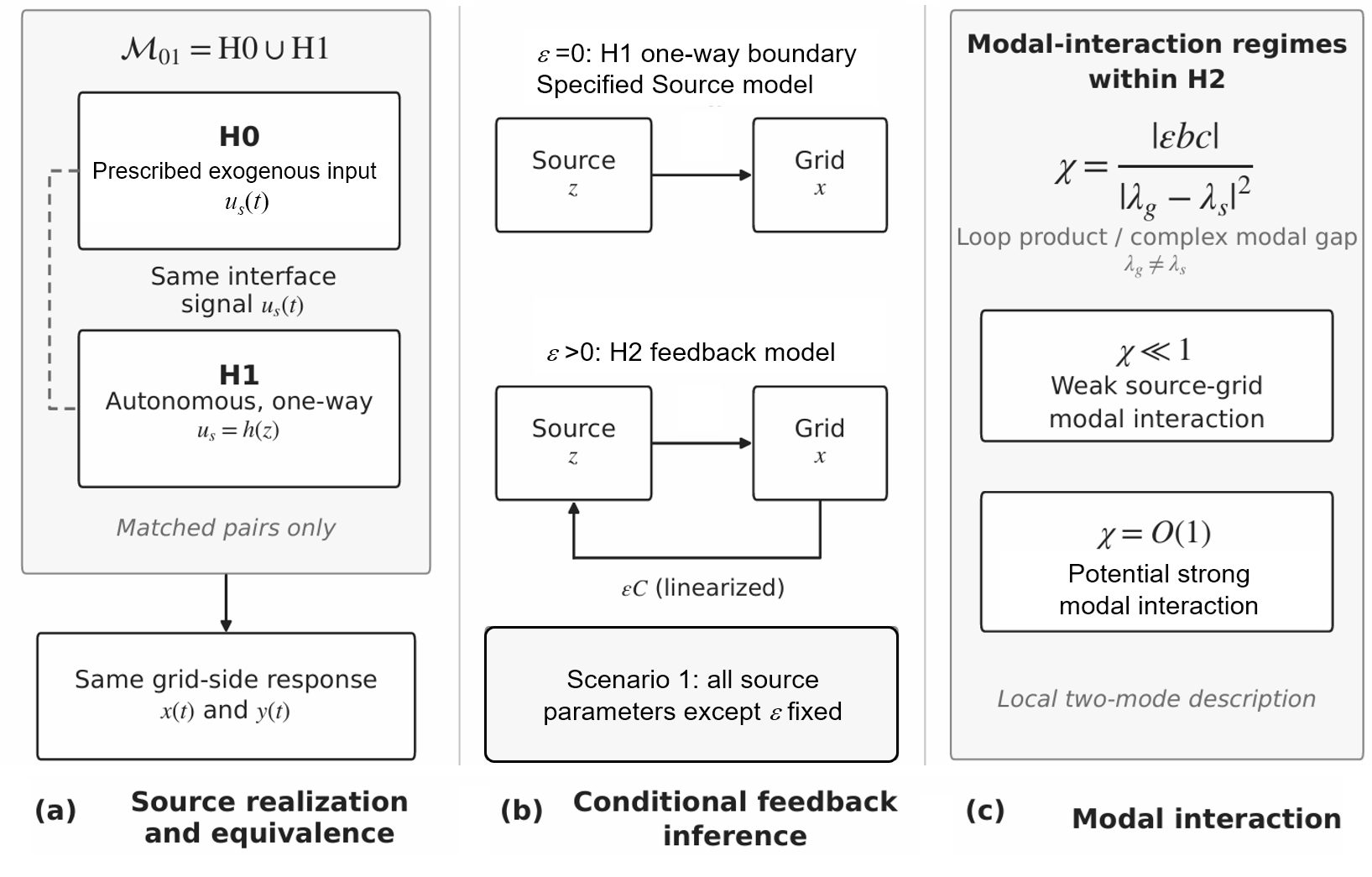}
\caption{Conceptual synthesis of the Kundur studies:
(a) observational equivalence of matched H0/H1 realizations;
(b) conditional detection of H2 feedback; and
(c) source--grid modal interaction under bidirectional coupling.}
\vspace{-0.2in}
\label{fig:hypothesis-concept-synthesis}
\end{figure}

Fig.~\ref{fig:hypothesis-concept-synthesis} summarizes the three main
findings. Matched H0 and H1 realizations can be observationally
equivalent from grid-side measurements; H2 feedback can be distinguished
from the one-way boundary only under suitable source-model, measurement,
and numerical-resolution conditions; and once bidirectional coupling is
present, source and grid modes can interact. These results define a
dynamical boundary between one-way and bidirectionally coupled
oscillations without implying a universal forced-versus-natural
classifier based on passive grid measurements alone.

\section{Large-System Test}
\label{subsec:naspi_controlled}

A modification of Case~7 in the 2021 IEEE--NASPI OSL Contest
is used to demonstrate the proposed approach for answering questions Q1 (H0/H1 equivalence) and Q2 (H1/H2 feedback detection) on the
243-bus, 140-generator WECC testbed with 2139 dynamic states
\cite{MaslennikovWang2022Contest,Maslennikov2016Library,Wang2025PMUOSL}.
A 0.379-Hz source is applied through the exciter of generator 2634/C as
$u_s=k_z z_1$, with $k_z=0.02$~pu, and is governed by
\begin{equation}
 \dot z_1=z_2,\qquad
 \dot z_2=-\omega_s^2z_1+\eps K_{\rm ref}\mathrm{DW},
 \label{eq:naspi_source}
\end{equation}
where $\omega_s=2\pi(0.379)$~rad/s,
$K_{\rm ref}=56707.19$~s$^{-2}$, and DW is the source-generator
rotor-speed deviation in pu.

\subsection{Q1: H0/H1 Observational Equivalence}

The H0 and H1 cases use the same source interface, with H1 remaining
one-way coupled to the grid. Their exported source-injection waveforms
match at numerical precision, with only small discrepancies  of  less than $10^{-6}$~pu in voltage magnitude and less than $10^{-3}$ degree angle. This is consistent with Theorem~\ref{thm:interface} and
Corollary~\ref{cor:h0h1}: matched H0 and H1 realizations are
observationally equivalent from grid-side measurements.

\subsection{Q2: H1/H2 Feedback Distinguishability}

Scenario~1 is tested here on the zero-feedback boundary against H2 using a fixed source model and grid-side measurements. Inference is restricted to
$\eps\in\{0,0.01,0.02,0.03,0.06,0.10\}$, with all other source
parameters fixed. Unlike the Kundur study, these
six directly simulated candidates are compared without interpolation or
continuous optimization. The primary case uses bus-2604 voltage magnitude and
angle at 30 samples/s over 60~s, with 20-dB measurement noise and an
80/20 estimation--validation split. Feedback is detected using the
calibrated improvement statistic $\Lambda$ in \eqref{eq:Lambda}, with its threshold determined from independent $\eps=0$ trials. 
The restricted zero-feedback family \(\mathcal M_{0}^{\rm cal}\) is tested against the positive-feedback candidate family \(\mathcal M_{+}^{\rm cal}\). For true \(\epsilon=0.02\), positive feedback, i.e. \(\mathcal M_{+}^{\rm cal}\), is selected in all 500 trials, and
the mean held-out weighted MSE decreases from 2.1001 for \(\mathcal M_0^{\rm cal}\) to 1.0021 for the selected positive-feedback candidate. Statistical detection and numerical validation are treated separately.
The same numerical-qualification rule is applied using the simulations at $1/256$-s
and $1/512$-s timesteps, with a third-step refinement check where
available. The local 60-s cases at
$\eps=0.02,0.06,$ and $0.10$ satisfy these criteria; the other positive
detections do not pass the numerical-validation screen and are therefore
not interpreted as numerically verified feedback signatures.

Feedback detection should not be confused with parameter recovery. For example, at true $\eps=0.10$, feedback is detected in all 500
10-s trials, while the exact $\eps=0.10$ candidate is selected in
476 cases, with the remaining 24 picking $\eps=0.06$.

\begin{figure}[t]
\centering
\includegraphics[width=\columnwidth,height=1.1in]
{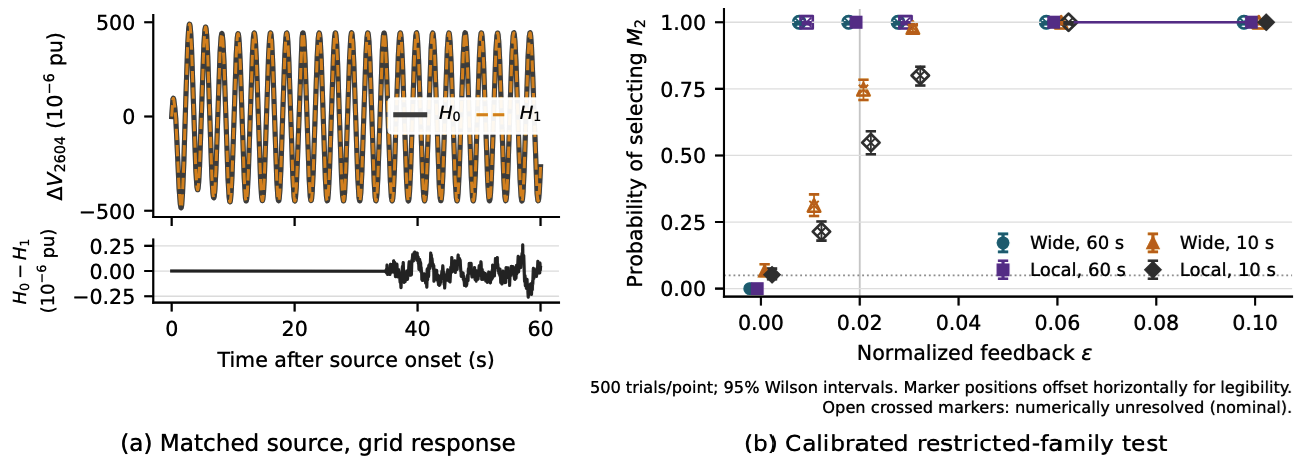}
\caption{IEEE-NASPI Case~7.
(a) Bus-2604 H0/H1 voltage-magnitude responses and their signed
difference, with time measured from source activation.
(b) Calibrated feedback-detection rates with pointwise 95\% Wilson
intervals. Crosses indicate cases that fail the numerical-validation
screen.}
\label{fig:naspi_controlled}
\vspace{-0.1in}
\end{figure}

\vspace{-0.1in}
\section{Conclusions}

This paper establishes source--grid coupling directionality as a useful
boundary for interpreting sustained oscillations. A prescribed forcing
signal and an autonomous one-way source can be observationally equivalent
from grid-side measurements when they produce the same interface signal.
Conversely, frequency proximity to a grid mode, source localization, or
frequency alignment of the grid response with the source does not by itself establish bidirectional
coupling. The proposed framework therefore separates identification of
an oscillatory component from the distinct question of whether the grid
dynamically feeds back into its source.
The Kundur and WECC system studies show that grid-to-source feedback can
be detected from grid-side measurements under specified source models
and sufficient numerical and statistical resolution. However, feedback
detection does not require unique recovery of the source parameters, and
source-model mismatch can be misinterpreted as grid-to-source feedback.
Likewise, failure to detect feedback does not establish that the feedback
is absent. Future work will study feedback distinguishability under joint source-parameter
uncertainty and model mismatch, and the additional measurements or interventions
needed to resolve remaining ambiguities.

\vspace{-0.1in}
\section*{Acknowledgment}
The authors used OpenAI ChatGPT and Codex to assist with editing and technical exposition in Sections II--VIII and with selected computational
workflows supporting the numerical studies in Sections VI and VII.
All mathematical developments, simulation results, interpretations, and
conclusions were reviewed and verified by the authors.
\vspace{-0.15in}

\bibliographystyle{IEEEtran}
\bibliography{references}

\end{document}